\documentclass[a4paper]{article}


\usepackage[utf8]{inputenc}
\usepackage{amsmath}
\usepackage{amsthm}

\usepackage{shuffle}
\usepackage{bm}
\usepackage{amssymb} 
\usepackage{mathtools}
\usepackage{url}
\usepackage{complexity}
\usepackage[textsize=footnotesize]{todonotes}
\usepackage{tikz}
\usetikzlibrary{arrows.meta,positioning,automata,shadows,shapes}
\usepackage{cleveref}


\newtheorem{theorem}{Theorem}[section]

\newtheorem{definition}[theorem]{Definition}

\newtheorem{lemma}[theorem]{Lemma}
\newtheorem{proposition}[theorem]{Proposition}
\newtheorem{remark}[theorem]{Remark}
\def\PT{\ComplexityFont{PT}}
\newcommand{\concatdot}{\mathord{\cdot}} 
\newcommand{\Sol}{\mathit{Sol}}
\newcommand{\Nat}{\mathbb{N}}
\newcommand{\tta}{\texttt{a}}
\newcommand{\ttb}{\texttt{b}}
\newcommand{\ttc}{\texttt{c}}
\newcommand{\calA}{\mathcal{A}}
\newcommand{\calR}{\mathcal{R}}
\newcommand{\calL}{\mathcal{L}}
\newcommand{\calJ}{\mathcal{J}}
\newcommand{\calB}{\mathcal{B}}
\newcommand{\calP}{\mathcal{P}}
\newcommand{\bmi}{\bm{i}}
\newcommand{\frakF}{\mathfrak{F}}
\newcommand{\bmF}{\bm{F}}
\newcommand{\bmQ}{\bm{Q}}
\newcommand{\Pcdot}{\bullet}
\newcommand{\Bcdot}{\circ}
\newcommand{\subword}{\preccurlyeq}

\newcommand{\mirror}[1]{{#1}^{\leftarrow}}
\newcommand{\depth}{{\textit{dp}}}
\renewcommand{\setminus}{\smallsetminus}
\newcommand{\step}[1]{{\xrightarrow{\!#1\!}}}
\newcommand{\eqdef}{\stackrel{\mbox{\begin{scriptsize}def\end{scriptsize}}}{=}}
\newcommand{\obracew}[2]{{\overset{#2}{\overbrace{#1}}}}

\author{S.\ Halfon and Ph.\ Schnoebelen}
\title{On shuffle products, acyclic automata and piecewise-testable languages}

\begin{document}
\maketitle

\begin{abstract}
We show that the shuffle $L\shuffle F$ of a piecewise-testable
language $L$ and a finite language $F$ is piecewise-testable. The
proof relies on a classic but little-used automata-theoretic
characterization of piecewise-testable languages. We also discuss some
mild generalizations of the main result, and provide bounds on
the piecewise complexity of $L\shuffle F$.
\end{abstract}


\section{Introduction}
\label{sec-intro}

Piecewise-testable languages, introduced in~\cite{simon72,simon75},
are an important variety of simple dot-depth one, hence star-free,
regular languages. As such they are closed
under boolean operations, left and right derivatives, and inverse
morphisms.

We prove in this paper that the shuffle product $L\shuffle F$ of $L$
with some finite language $F$ is piecewise-testable when $L$ is.
\\

\noindent
\emph{Some motivations.}
The question was raised by our investigations of $\FO(A^*,\subword)$,
the first-order ``logic of subwords'', and its decidable two-variable
fragment~\cite{KS-csl2016,HSZ-lics2017}.  Let us use $u\subword v$ to
denote that $u$ is a (scattered) subword, or a subsequence, of $v$.  For example,
$\texttt{simon}\subword\texttt{stimulation}$ while
$\texttt{ordering} \not\subword \texttt{wordprocessing}$.  Given a formula
$\psi(x)$ with one free variable, e.g.,
\begin{gather}
\label{ex-psi}
\tag{$\psi(x)$}
\texttt{ab}\subword x
\land \texttt{bc}\subword x
\land\texttt{ac}\not\subword x
\:,
\end{gather}
we write $\Sol(\psi)$ for its set of solutions. In this example,
$\Sol(\psi)$ is the set of all words that have $\texttt{ab}$,
$\texttt{bc}$, but not $\texttt{ac}$, among their subwords. If we
assume that the alphabet under consideration is
$A=\{\tta,\ttb,\ttc\}$, then
$\Sol(\psi)$ is
 the language described via
$\ttc^*\ttb^+\ttc(\ttb+\ttc)^*\tta^+\ttb(\tta+\ttb)^*$,
a simple regular expression. It is shown in~\cite{KS-csl2016,HSZ-lics2017} how to
compute such solutions automatically.  Let us extend the framework
with the predicate $\subword_1$, defined via
\[
u\subword_1 v \iff u\subword v\land |u| = |v| - 1,
\]
where $|u|$ is the length of $u$, so that $\subword$ is the reflexive transitive
closure of $\subword_1$. Now an $\FO^2(A^*,\subword,\subword_1)$
formula of the form
\begin{gather}
\tag{$\phi(x)$}
 \exists y: y\subword_1 x\land \psi(y)
\end{gather}
has $\Sol(\phi)= \Sol(\psi)\shuffle A$ as set of solutions.  This is
because $L\shuffle A$ is the union of all $u\shuffle a$ for $u\in L$
and $a\in A$ , and $u\shuffle a$ is the set of all words that can be
obtained by inserting the letter $a\in A$ somewhere in $u$.
Such
equalities provide an effective quantifier-elimination procedure for
(a fragment of) the logic.  Extending the complexity analysis
from~\cite{KS-csl2016} requires proving that $\Sol(\phi)$ is
piecewise-testable when $\Sol(\psi)$ is. This
will be a consequence of the main result in this paper.  \\

\noindent
\emph{Through the mirror automaton.}
It took us some time to
find a simple proof that $L\shuffle A$ is piecewise-testable when
$L$ is. In particular, starting from any of the
well-known characterizations of piecewise-testable languages (see
Definition~\ref{def-PT-multiple} below) did not take us very far.
Neither could we use the approach developed for star-free languages
---see~\cite[Coro.~3.3]{castiglione2012}--- since piecewise-testable
languages are not closed under bounded shuffle.
We eventually found a simple proof based on a classic but little-used
characterization: \emph{a regular language $L$ is
piecewise-testable if, and only if, $L$ and its mirror image
$\mirror{L}$ are $\calR$-trivial, that is, iff
 the minimal DFAs for $L$ and for $\mirror{L}$ are
both acyclic}. This characterization is not explicitly mentioned in the
main references on piecewise-testable languages,
be they  classic (e.g.,~\cite{sakarovitch83}) or
recent (e.g.,~\cite{masopust2017}). 
As far as we know, it was first given explicitly by
Brzozowski~\cite{brzozowski76b}. Beyond that, we only saw it
in~\cite{schwentick2001,klima2012b} (and derived works).
\\

\noindent
\emph{Outline of the paper.}  In section~\ref{sec-basics} we recall
the necessary notions on automata, languages, piecewise-testability,
etc., state our main result and discuss extensions.  In
Section~\ref{sec-main} we prove the main technical result: the class
of $\calR$-trivial regular languages is closed under interpolation
products with finite languages. The proof is by inspecting the
(nondeterministic) shuffle automaton and checking that the standard
determinization procedure yields an acyclic automaton. 
In Section~\ref{sec-complexity} we provide bounds on the piecewise complexity
of some shuffle languages. 
In the
conclusion, we list some questions raised by this work.



\section{Basics}
\label{sec-basics}

\noindent
\emph{Finite automata.}
We consider languages over a fixed finite alphabet $A=\{a,b,\ldots\}$
and finite automata (NFAs) of the form $\calA=(Q,A,{\cdot},I,F)$ where
``$\cdot$'' denotes the transition function.  For $p\in Q$ and $a\in
A$, $p\concatdot a$ is a subset of $Q$. 
The transition function is extended to sets of
states $S\subseteq Q$ via $S\concatdot a=\bigcup_{p\in S}p\concatdot a$ and to
words by $S\cdot\epsilon = S$ and $S\concatdot (au)=(S\concatdot a)\concatdot u$. We
often write $p\step{u}q$ rather than $q\in (p\concatdot u)$.  The language
recognized by $\calA$ is $L(\calA)\eqdef \{u\in A^* ~|~ (I\cdot u)\cap
F\neq\emptyset\}$.

$\calA$ is deterministic (is a DFA) if $|I|\leq 1$ and $|p\cdot a|\leq 1$ for all
$p$ and $a$.  It is complete if $|I|\geq 1$ and $|p\cdot a|\geq 1$ for
all $p$ and $a$.

The transition function induces a quasi-ordering on the states of
$\calA$: $p\leq_\calA q$ if there is a word $u$ such that
$p\step{u}q$, i.e., when $q$ can be reached from $p$ in the directed
graph underlying $\calA$.  The quasi-ordering is a partial ordering if
$\calA$ is acyclic, i.e., $p\step{u}q\step{v}p$ implies $p=q$; or in
other words, when the only loops in $\calA$ are self-loops. It is well
known that the $\calR$-trivial languages are exactly the languages
accepted by (deterministic) acyclic automata~\cite{brzozowski80b}.
Regarding self-loops, we
say that $p$ is $a$-stable when $p\cdot a=\{p\}$, and that it is
$B$-stable, where $B\subseteq A$ is some subalphabet, if it is
$a$-stable for each $a\in B$.  \\

\noindent
\emph{Subwords and piecewise-testable languages.}
We write $u\subword v$ when $u$ is a (scattered) subword of $v$,
i.e., can be obtained from $v$ by removing some of its letters
(possibly none, possibly all).  A word $u=a_1a_2\cdots a_n$ generates
a principal filter in $(A^*,\subword)$. This is the language $L_u
=\{v~|~u\subword v\}$, also denoted by the regular expression
$A^*a_1A^*a_2\ldots A^*a_nA^*$. The example in the introduction
has $\Sol(\psi)=
L_{\tta\ttb}\cap L_{\ttb\ttc}\cap (A^*\setminus L_{\tta\ttc})$.

For $k\in\Nat$, we write $u\sim_k v$ when $u$ and $v$ have the same
subwords of length at most $k$~\cite{simon72}. This equivalence is
called \emph{Simon's congruence} since $u\sim_k v$ implies $xuy\sim_k
xvy$ for all $x,y\in A^*$. Furthermore, $\sim_k$ partitions $A^*$ in a
finite number of equivalence classes.

\begin{definition}[Piecewise-testable languages]
\label{def-PT-multiple}
A language $L\subseteq A^*$ is piecewise-testable
if it satisfies one of the equivalent following
properties:\footnote{The last four characterizations 
refer to notions that we do not redefine in this article because we do not
use them. See references for details.}
\begin{itemize}
\item $L$ is a finite boolean combination of principal filters,
\item $L$ is a union $[u_1]_k\cup\cdots\cup [u_\ell]_k$ of
  $\sim_k$-classes for some $k\in\Nat$,
\item $L$ can be defined by a $\calB\Sigma_1$-formula in the
  first-order logic over words~\cite{DGK-ijfcs08},
\item the syntactic monoid  of $L$ is finite and $\calJ$-trivial
  (Simon's theorem)~\cite{simon72},
\item the minimal automaton for $L$ is finite, acyclic, and satisfies
  the UMS property~\cite{simon75,stern85},
\item the minimal automaton for $L$ is finite, acyclic, and
locally confluent~\cite{klima2013}.
\end{itemize}
\end{definition}
The piecewise-testable languages over some $A$ form a variety and we
mentioned the associated closure properties in our introduction.  Note
that piecewise-testable languages are not closed under alphabetic
morphisms, concatenations, or star-closures.
\\

\noindent
\emph{Shuffling languages.}
In this note we focus on the shuffle product of words and languages, and more
generally on their parameterized infiltration product. When
$C\subseteq A$ is a subalphabet and $u,v$ are two words, we let
$u\uparrow_C v$ denote the language of all words that are obtained by
shuffling $u$ and $v$ \emph{with possible sharing of letters from
$C$}. This is better defined via a notation for extracting subwords:
for a word $u=a_1a_2\cdots a_n$ of length $n$ and a subset
$K=\{i_1,\ldots,i_r\}\subseteq\{1,\ldots,n\}$ of positions in $u$
where $i_1<i_2<\cdots<i_r$, we write $u_K$ for the subword
$a_{i_1}a_{i_2}\cdots a_{i_r}$ of $u$.	Then we let
\[
x\in u\uparrow_C v
\iff \left\{
\begin{array}{l}
\exists K,K':
K\cup K'= \{1,2,\ldots,|x|\}, \\
x_K=u, x_{K'}=v, \text{ and }  x_{K\cap K'}\in C^*.
\end{array}
\right.
\]
The operation is lifted from words to languages in the standard way
via $L\uparrow_C L'=\bigcup_{u\in L}\bigcup_{v\in L'}u\uparrow_C v$.
This generalizes shuffle products and the interpolation products
$L\uparrow L'$ from~\cite{pin83,sakarovitch83} since $L\shuffle L' =
L\uparrow_\emptyset L'$ and $L\uparrow L' = L\uparrow_A L'$. Note that
$L\uparrow_C L'\subseteq L\uparrow_{C'}L'$ when $C\subseteq C'$. 
Also note that $L\uparrow_C L'=L\shuffle L'$ when
$L$ or $L'$ is subword-closed.
 A
\emph{shuffle ideal} is any language of the form $L\shuffle A^*$. It
is well-known that shuffle ideals are finite unions of principal
filters~\cite{haines69,heam2002} hence they are piecewise-testable.

\begin{theorem}[Main result]
\label{thm-main}
If $L$ is regular and $X$-trivial (where $X$ can be $\calR$, $\calL$,
or $\calJ$) then $L\uparrow_C L'$ is regular and $X$-trivial when $L'$
is finite, or cofinite, or is a shuffle ideal.
\end{theorem}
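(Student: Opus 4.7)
The proof will proceed in three reductions, with the finite $L'$ case as the main technical core.

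\textbf{Mirror reduction.} The identity $\mirror{L \uparrow_C L'} = \mirror{L} \uparrow_C \mirror{L'}$ follows by reversing the position sets in the defining decomposition, and each of the three classes for $L'$ is closed under mirroring (finite to finite, cofinite to cofinite, and $K \shuffle A^*$ to $\mirror{K} \shuffle A^*$). Since $L$ is $\calL$-trivial iff $\mirror{L}$ is $\calR$-trivial, and $\calJ$-trivial iff both $L$ and $\mirror{L}$ are $\calR$-trivial, it suffices to prove the case $X = \calR$.

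\textbf{Reducing the shape of $L'$.} For $L' = K \shuffle A^*$ with $K$ finite, I would verify the identity
\[
L \uparrow_C (K \shuffle A^*) = (L \uparrow_C K) \shuffle A^*
\]
by transferring positions between the two operands on the right-hand side. The right-hand side is a shuffle ideal, hence by Higman's lemma a finite union of principal filters and therefore piecewise-testable, in particular $\calR$-trivial. For cofinite $L' = A^* \setminus F$ with $M = \max_{w \in F} |w|$, split $L' = (A^{\leq M} \setminus F) \cup A^{>M}$: the first summand is finite, and the second equals the shuffle ideal $A^{M+1} \shuffle A^*$. Since $\uparrow_C$ distributes over union on the right and $\calR$-triviality is preserved under finite union (the direct product of acyclic DFAs is acyclic), the cofinite case reduces to the finite and shuffle-ideal cases.

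\textbf{The finite case.} With $\calA$ an acyclic complete DFA for $L$ and $\calB$ an acyclic tree NFA for $L' = F$, I build the nondeterministic shuffle automaton with states $Q_\calA \times Q_\calB$, where on letter $a$ the state $(q, p)$ has successors $(q \cdot a, p)$, $(q, p')$ for $p' \in p \cdot a$, and, when $a \in C$, also $(q \cdot a, p')$. This recognizes $L \uparrow_C F$, and its underlying graph is acyclic because the potential $\pi(q, p) = d_\calA(q) + d_\calB(p)$ (longest-path depths in the respective acyclic DFAs) strictly increases along any non-self-loop transition. The main obstacle is then to verify that the standard subset construction still yields an acyclic DFA. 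For this, I would suppose a nontrivial cycle $S \step{u} S' \step{v} S$ with $S \neq S'$, trace each $s \in S$ back through an NFA path to a predecessor $t \in S$ with $t \step{uv} s$, and iterate this back-pointer map on the finite set $S$ to produce a fixed point $s^* \in S$ whose $uv$-path to itself uses only self-loops at $s^*$ (by NFA acyclicity). The specific structure of self-loops in the shuffle NFA---a self-loop on $a$ at $(q, p)$ requires $q \cdot a = q$ in $\calA$ or $p \in p \cdot a$ in $\calB$---can then be leveraged to show that every state of $S$ also self-loops on each letter of $u$, forcing $S \cdot u = S$ and contradicting $S \neq S'$.
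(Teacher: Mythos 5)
Your overall architecture coincides with the paper's: reduce $\calL$ and $\calJ$ to $\calR$ by mirroring, split $L'$ into the finite, shuffle-ideal and cofinite cases, and prove the finite case by determinizing the shuffle NFA and showing the powerset automaton is acyclic. Your reductions are all correct; the paper handles the ideal case slightly differently (it shows $L\uparrow_C L'=(L\uparrow_C L')\uparrow_C A^*$ is itself a shuffle ideal, rather than proving your identity $L\uparrow_C(K\shuffle A^*)=(L\uparrow_C K)\shuffle A^*$, but both work), and it reduces the finite case further to a single word $w$ by distributing $\uparrow_C$ over union, pairing the DFA for $L$ with a position counter $\{0,\dots,m\}$ for $w$ instead of your tree NFA for all of $F$ at once --- an inessential variation.

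The gap is in the last step of the powerset-acyclicity argument. Your back-pointer iteration does produce one state $s^*=(q^*,p^*)\in S$ whose $uv$-path to itself consists of self-loops, and since the tree component admits no self-loops this forces $q^*\cdot a=q^*$ for every letter $a$ of $uv$. But you then assert that ``every state of $S$ also self-loops on each letter of $u$'' without argument, and this passage from \emph{one} stable state to \emph{all} of $S$ is precisely where the work lies: it is the content of the paper's central claim, proved there by a well-founded induction on $\leq_\calB$ restricted to $S$ (every first component occurring in $S$ is $B$-stable, with $B$ the set of letters on the cycle), including a non-trivial contradiction argument for states whose back-path changes first component. Your setup can in fact be completed: each $s=(q,p)\in S$ is reached from its self-looping ancestor $t_s=(q_s^*,p_s^*)$ by a path labelled by some power of $uv$, and any transition on that path changing the first component must send it to $q_s^*\cdot a$ for a letter $a$ of $uv$; since $q_s^*\cdot a=q_s^*$, the first component never changes, so $q=q_s^*$ is $B$-stable and $(q,p)\step{a}(q\cdot a,p)=(q,p)$ is a self-loop for every $a\in B$. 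Note also that this yields only $S\Pcdot u\supseteq S$, not equality; you conclude $S=S'$ by chaining the inclusions around the cycle (or by running the same argument at $S'$). As written, the crucial propagation step is missing, so the proposal does not yet constitute a proof.
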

Let us first note that, since $A$ is finite, Theorem~\ref{thm-main}
answers the question about $L\shuffle A$ raised in our introduction.
A proof of the Theorem is given in the next section after a
few observations that we now make.

Let us mention a few directions in which our main result cannot be
extended:
\begin{itemize}
\item The shuffle of two piecewise-testable languages is
  star-free~\cite[Theorem~4.4]{castiglione2012} but is not always
  piecewise-testable: for example $a^*\shuffle
  ab^*$, being $a(a+b)^*$, is not piecewise-testable while $a^*$ and $ab^*$ are.

\item The concatenation $L{\cdot} F$ of a piecewise-testable $L$ and a
  finite $F$ is not always piecewise-testable: $(a+b)^*$ is
  piecewise-testable but $(a+b)^*a$ is not. Note that $L\concatdot F$ is
  included in $L\shuffle F$ that we claim is piecewise-testable.

\item The scattered residual $L \dashrightarrow u$ of a
  piecewise-testable $L$ by some word $u$ is not always
  piecewise-testable.  For example $ac(a+b)^* \dashrightarrow
  c=a(a+b)^*$.  (Recall that $w \dashrightarrow u$ is the set of all
  words $v$ such that $w\in u\shuffle v$, obtained by removing the
  subword $v$ somewhere along $w$~\cite{kari94}.)
\end{itemize}

Finally, there are some (admittedly degenerate) situations that are
not covered by Theorem~\ref{thm-main} and where the shuffle of
two piecewise-testable languages is piecewise-testable.
\begin{proposition}
If $L_1,\ldots,L_m\subseteq A^*$ are piecewise-testable then
$L_1\shuffle \cdots \shuffle L_m$ is piecewise-testable in any of the following
cases:
\begin{itemize}
\item the $L_i$'s are all complements of shuffle ideals, i.e., they are
subword-closed;
\item their subalphabets are pairwise disjoint.
\end{itemize}
\end{proposition}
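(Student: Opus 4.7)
\medskip

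\noindent\textbf{Proof plan.}
I would treat the two cases separately; both reduce to closure properties of the class of piecewise-testable languages that are either already on the table or easy to establish.

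For the \emph{subword-closed} case, my starting point is the folklore observation that every subword-closed regular language is automatically piecewise-testable: by Higman's lemma its complement is upward-closed in $(A^*,\subword)$ and hence a finite union of principal filters, so the language itself is a Boolean combination of principal filters. It then suffices to check that $L := L_1\shuffle\cdots\shuffle L_m$ is regular and subword-closed. Regularity of a shuffle of regular languages is classical. For subword-closedness, I would fix $w\in L$ witnessed by factors $u_i\in L_i$ together with an interleaving pattern, and let $w'\subword w$; the positions removed from $w$ can be traced back to the $u_i$ that contributed them, giving $u_i'\subword u_i$. Each $u_i'$ still lies in $L_i$ by hypothesis, and $w'$ is the corresponding interleaving of the $u_i'$, so $w'\in L$.

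For the \emph{disjoint-alphabets} case, let $A_i$ denote the subalphabet of $L_i$, set $B := A_1\cup\cdots\cup A_m$, and let $\pi_i\colon A^*\to A_i^*$ be the morphism erasing the letters outside $A_i$. The first step is the identity
\[
L_1\shuffle\cdots\shuffle L_m \;=\; B^* \,\cap\, \bigcap_{i=1}^{m}\pi_i^{-1}(L_i),
\]
which holds because the disjointness of the $A_i$ forces each letter of a word $w\in B^*$ into exactly one factor of any interleaving decomposition, and conversely $w\mapsto(\pi_1(w),\ldots,\pi_m(w))$ yields a canonical decomposition. Since $B^*$ is plainly piecewise-testable (its complement is a finite union of principal filters $L_c$ for $c\in A\setminus B$) and PT is closed under Boolean operations, the proof reduces to showing that each $\pi_i^{-1}(L_i)$ is PT. Rather than invoke closure of PT under arbitrary inverse morphisms, I would exploit the specific shape of $\pi_i$: for $u\in A_i^*$ one has $u\subword \pi_i(w) \iff u\subword w$, so $\pi_i^{-1}(L_u)$ coincides with the principal filter of $u$ inside $A^*$. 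Writing $L_i$ as a Boolean combination of principal filters $L_u$ with $u\in A_i^*$ then transports, via $\pi_i^{-1}$, to a Boolean combination of principal filters in $A^*$, which is PT.

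Neither case presents a serious obstacle. The one bookkeeping point that deserves attention is the $\supseteq$ direction of the identity in the second case, where disjointness of the $A_i$ is essential for the canonical reconstruction $u_i := \pi_i(w)$ to assemble back into $w$ position by position.
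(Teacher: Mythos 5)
Your proposal is correct and follows essentially the same route as the paper: the subword-closed case via closure of subword-closedness under shuffle plus Higman's lemma, and the disjoint-alphabet case via the identity expressing the shuffle as an intersection of inverse images of erasing morphisms (the paper's Lemma~\ref{lem-disjoint-alpha}, which uses morphisms erasing $A_j$ rather than your projections onto $A_i$, hence dispenses with the extra $B^*$ factor). The only cosmetic difference is that you transport principal filters through $\pi_i^{-1}$ by hand instead of invoking closure of $\PT$ under inverse morphisms, which the paper takes as given.
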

The first claim is easy to see since the shuffle of subword-closed
languages is subword-closed, and the second claim\footnote{Already
given in the long version of~\cite{masopust2016b}.}
is a consequence of the following Lemma.
\begin{lemma}[{\protect See also~\cite[Lemma~6]{esik98}}]
\label{lem-disjoint-alpha}
Let $\frakF$ be a family of languages over $A$ that is closed under
intersections and inverse morphisms.  If $L_1,L_2\in\frakF$ use
disjoint subalphabets, then $L_1\shuffle L_2$ is in $\frakF$ too.
\end{lemma}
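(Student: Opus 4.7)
The plan is to reduce the lemma to an elementary identity: on disjoint alphabets, shuffle is recoverable from the two projections. Let $A_1, A_2 \subseteq A$ be the disjoint subalphabets with $L_i \subseteq A_i^*$, set $B = A_1 \cup A_2$, and without loss of generality take $A = B$ (the shuffle of $L_1$ and $L_2$ lives entirely in $B^*$, so restricting the ambient alphabet and then lifting back is a routine adjustment in any alphabet-parameterized family). Define projection morphisms $\pi_1, \pi_2 : A^* \to A^*$ by setting $\pi_i(a) = a$ when $a \in A_i$ and $\pi_i(a) = \epsilon$ when $a \in A_{3-i}$.

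The heart of the proof is the identity
\[
L_1 \shuffle L_2 \;=\; \pi_1^{-1}(L_1) \,\cap\, \pi_2^{-1}(L_2),
\]
which holds precisely because $A_1 \cap A_2 = \emptyset$. I would verify both inclusions directly. For $\subseteq$: given $w \in u \shuffle v$ with $u \in L_1, v \in L_2$, disjointness of $A_1, A_2$ forces each position of $w$ to carry a letter from exactly one of the two subalphabets, so the $A_1$-positions spell $u$ and the $A_2$-positions spell $v$; hence $\pi_1(w) = u \in L_1$ and $\pi_2(w) = v \in L_2$. For $\supseteq$: a word $w$ with $\pi_i(w) \in L_i$ partitions positionally into its $A_1$- and $A_2$-letters, so $w$ is literally a shuffle of $\pi_1(w)$ and $\pi_2(w)$, witnessing $w \in L_1 \shuffle L_2$.

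With the identity in hand, closure of $\frakF$ under inverse morphism yields $\pi_1^{-1}(L_1), \pi_2^{-1}(L_2) \in \frakF$, and closure under intersection then delivers $L_1 \shuffle L_2 \in \frakF$. The only real substance of the argument is the disjoint-alphabet identity; the one place requiring a small amount of care is the initial reduction to $A = A_1 \cup A_2$, since the naïve identity would otherwise admit spurious words containing letters outside $B$. The main obstacle to anticipate is not technical but strategic, namely resisting the temptation of a more elaborate automaton- or monoid-theoretic construction when this projection argument suffices.
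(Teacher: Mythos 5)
Your proof is correct and is essentially the paper's own argument: the paper writes $L_1\shuffle L_2 = e_{A_2}^{-1}(L_1)\cap e_{A_1}^{-1}(L_2)$ where $e_B$ erases the letters of $B$ and fixes all others, then invokes closure under inverse morphisms and intersection. The only difference is your preliminary reduction to $A=A_1\cup A_2$, which is both unnecessary and not quite licensed by the hypotheses (the family $\frakF$ lives over a fixed $A$ and is not assumed to be alphabet-parameterized); if you instead let $\pi_i$ erase only the letters of $A_{3-i}$ and fix every other letter of $A$, then $\pi_i^{-1}(L_i)$ automatically excludes words containing letters outside $A_1\cup A_2$ (since $L_i\subseteq A_i^*$), and your identity holds over the full alphabet with no detour.
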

\begin{proof}
Write $e_B: A^*\to A^*$ for the erasing morphism that replaces all
letters from some subalphabet $B$ with $\epsilon$ and leaves other
letters unchanged.  Assuming $L_1\subseteq A_1^*$ and $L_2\subseteq
A_2^*$, with furthermore $A_1\cap A_2=\emptyset$, one has
\begin{gather*}
L_1\shuffle L_2 = (L_1\shuffle A_2^*)\cap (L_2\shuffle A_1^*)
= e_{A_2}^{-1}(L_1) \cap e_{A_1}^{-1}(L_2)
\:.
\end{gather*}
The last equality shows that $L_1\shuffle L_2$ is in $\frakF$.
\end{proof}



\section{Shuffling acyclic automata}
\label{sec-main}

In this section we first prove Proposition~\ref{prop-main} by inspecting the shuffling
of automata.
\begin{proposition}
\label{prop-main}
If $L\subseteq A^*$ is regular and $\calR$-trivial then $L\uparrow_C
w$ is too, for any $w\in A^*$ and $C\subseteq A$.
\end{proposition}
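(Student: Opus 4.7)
The plan is to exhibit an acyclic DFA for $L \uparrow_C w$ by constructing a non-deterministic ``shuffle-with-sharing'' automaton and determinizing it. Starting from an acyclic DFA $\calA = (Q, A, \concatdot, q_0, F)$ for $L$ (which exists since $L$ is $\calR$-trivial), write $w = w_1 \cdots w_n$ and form the NFA $\calB$ with state set $Q \times \{0, 1, \ldots, n\}$, initial state $(q_0, 0)$, final states $F \times \{n\}$, and three kinds of transitions out of $(q, i)$ on letter $a$: an $\calA$-step to $(q \concatdot a, i)$ when $q \concatdot a$ is defined; a $w$-step to $(q, i+1)$ when $i < n$ and $w_{i+1} = a$; and, only when both apply and $a \in C$, a shared step to $(q \concatdot a, i+1)$. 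A direct check gives $L(\calB) = L \uparrow_C w$.

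The key structural property of $\calB$ is that its reachability relation is a partial order: each transition $(q, i) \to (q', i')$ satisfies both $q \leq_\calA q'$ and $i \leq i'$, and both orderings are antisymmetric (the former by $\calA$'s acyclicity, the latter on $\Nat$). Hence any loop $p \step{u} p$ in $\calB$ with $u \neq \epsilon$ collapses to a single state $p = (q, i)$ at which every letter $u_j$ acts as a type-(i) self-loop: $q \concatdot u_j = q$ in $\calA$ and $w_{i+1} \neq u_j$. In other words, $\calB$ is acyclic modulo self-loops.

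It remains to apply the subset construction to $\calB$ and show that the resulting DFA $\calB'$ is itself acyclic. Concretely, I would argue that for any reachable subset $S$ of $\calB'$ and any non-empty word $u = u_1 \cdots u_k$ with $S \concatdot u = S$, one has $S \concatdot u_1 = S$. The natural strategy is to partition $S$ into its slices $S_i = \{q : (q, i) \in S\}$ and induct on $i$: the bottom slice $S_0$ evolves as a pure $\calA$-subset-DFA, and the standard observation that $q \mapsto q \concatdot u$ is a length-non-increasing self-map of the finite poset $(Q, \leq_\calA)$ (combined with $q \leq_\calA q \concatdot u$ and $\calA$'s acyclicity) forces $q \concatdot u = q$ pointwise on $S_0$, so that $S_0 \concatdot u_j = S_0$ for each $j$. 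For $i \geq 1$, the slice $S_i$ evolves by the same $\calA$-subset rule together with injections of $S_{i-1}$ (and, when $w_i \in C$, of $S_{i-1} \concatdot w_i$) at exactly those positions where the current letter is $w_i$.

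The main obstacle is the inductive step for $i \geq 1$: the pure acyclic-subset-DFA argument does not directly apply to $S_i$ because of the periodic injections from $S_{i-1}$. The hope is that the inductive hypothesis (constancy of $S_{i-1}$ along the cycle), combined with a careful comparison between $S_i \concatdot u$ and the set of injected-and-evolved elements, forces the injections to be redundant, so that $S_i$ too is intermediate-constant and $S \concatdot u_1 = S$ follows. This kind of ``inspection of the shuffling of automata'', as the paper puts it, is the heart of the argument and requires delicate bookkeeping of which elements of $S$ come from which slice and which letter of $u$.
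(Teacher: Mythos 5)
Your construction of the NFA $\calB$ and the reduction of the Proposition to the claim that the subset construction applied to $\calB$ yields an acyclic DFA are exactly the paper's (including the observation that $\leq_\calB$ embeds into the product of $\leq_\calA$ and the order on $\{0,\dots,n\}$). The base case of your slice induction is also sound: the $0$-slice evolves deterministically under $\calA$ alone, and a bijective self-map $q\mapsto q\cdot u$ of a finite poset satisfying $q\leq_\calA q\cdot u$ must be the identity, which forces $q\cdot u_j=q$ pointwise. But you have not proved the Proposition: the inductive step for slices $i\geq 1$ --- which you yourself label ``the main obstacle'' and describe only as a ``hope'' --- is precisely where the difficulty lives, and it is not a routine bookkeeping exercise. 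The slice $S_i$ evolves under maps of the form $T\mapsto T\cdot a\cup J_a$ with injections $J_a$ fed from $S_{i-1}$, and knowing that the composite over the whole cycle fixes $S_i$ does not by itself force each individual step to fix it; some genuinely new argument is needed, and as the paper's Remark~\ref{rem-nfa-acyclic} stresses, determinizing an acyclic NFA does \emph{not} in general give an acyclic DFA, so the conclusion cannot be waved through.

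The paper closes exactly this gap by a different induction: rather than inducting slice by slice on the second coordinate, it performs a well-founded induction on the partial order $\leq_\calB$ over the individual states $(p,k)$ occurring in the cycle, proving that every such $p$ is $B$-stable in $\calA$ (where $B$ is the set of letters read along the cycle). The point is that a path in $\calB$ around the cycle ending at $(p,k)$ starts at some $(p_0,\ell_0)\leq_\calB(p,k)$ which again lies in $S_n=S_0$; if either coordinate strictly increases along this path, the induction hypothesis applies to $(p_0,\ell_0)$ and yields a contradiction or the desired stability, and if neither does, Eq.~\eqref{eq-delta-B} forces $p$ to be stable directly. Once every state in the cycle is $B$-stable, each $S_{j-1}\Pcdot a_j\supseteq S_{j-1}$, so the sets are nondecreasing around the cycle and hence all equal. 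If you want to salvage your slice-wise plan, you would essentially have to reconstruct this $B$-stability argument inside each slice; as it stands, your proof is incomplete at its central step.
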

Let $\calA=(Q,A,{\cdot},i,F)$ be an acyclic complete deterministic
automaton for $L$, and let $w=z_1\cdots z_m\in A^*$ be the word under
consideration. When building the shuffle automaton for $L\uparrow_C
w$, it is more convenient to consider the smallest automaton for $w$,
deterministic but not complete.	 Formally, we let
$\calB=(Q',A,{\Bcdot},i',F')$ given by $Q' = Q\times\{0,1,\ldots,m\}$,
$i'=(i,0)$, $F'=F\times \{m\}$, and a transition table given by
\begin{equation}
\label {eq-delta-B}
(p,k)\Bcdot a=
\bigl\{
\;\;
(p\cdot a,k)
,\;\;
\obracew{(p,k+1)}{\text{if $a{=}z_{k+1}$}}
,
\obracew{(p\cdot a,k+1)}{\text{if furthermore $a\in C$}}
\bigr\}.
\end{equation}
This is a standard construction: $\calB$ is nondeterministic in
general, and it is easy to see that it accepts exactly $L\uparrow_C
w$.

Observe that $\calB$ too is acyclic: by Eq.~\eqref{eq-delta-B}, for
any transition $(p,k)\step{a}(q,\ell)$ one has $p\leq_\calA q$ and
$k\leq\ell$ and this extends to any path $(p,k)\step{u}(q,\ell)$ by
transitivity. Thus $\leq_\calB$ is included in the Cartesian product
of two partial orderings.

From $\calB=(Q',A,{\Bcdot},i,F')$ we derive a powerset automaton
$\calP=(\bmQ,A,{\Pcdot},\bmi,\bmF)$  in the standard way, i.e.,
$\bmQ=2^{Q'}=\{S~|~S\subseteq Q'\}$, $\bmi=\{i'\}$,
$\bmF=\{S\in\bmQ~|~S\cap F'\neq \emptyset\}$ and $S\Pcdot
a=\{S\Bcdot a\}$. It is well known that $\calP$ is
deterministic, complete, and accepts exactly the language accepted by
$\calB$, i.e., $L\uparrow_C w$.
\begin{lemma}
\label{lem-P-acyclic}
$\calP$ is acyclic.
\end{lemma}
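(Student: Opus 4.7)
The plan is to prove, by strong induction on $|Q'|$, the more general fact that the powerset construction applied to any acyclic NFA yields an acyclic DFA. Lemma~\ref{lem-P-acyclic} then follows immediately, since $\calB$ is already known to be acyclic.

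The central monotonicity observation is that each transition $S \Pcdot a = T$ in the powerset automaton satisfies $\downarrow S \subseteq \downarrow T$ (every $(p,k) \in S$ has the successor $(p \cdot a, k) \in T$ with $(p,k) \leq_\calB (p \cdot a, k)$) and dually $\uparrow T \subseteq \uparrow S$ (every $(q,\ell) \in T$ has a $\calB$-predecessor $(p,k) \in S$ with $(p,k) \leq_\calB (q,\ell)$). Consequently, along any hypothetical cycle $S_0 \step{a_1} S_1 \step{a_2} \cdots \step{a_n} S_n = S_0$ in $\calP$, the down-set $\downarrow S_i$ is constant, and its $\leq_\calB$-maximal elements---which, because $\leq_\calB$ is a partial order, coincide with $\max_{\leq_\calB}(S_i)$---form a common set $M \subseteq \bigcap_i S_i$.

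I next show that each $p \in M$ is $a_j$-stable in $\calB$ for every cycle letter $a_j$: any $q \in p \Bcdot a_j$ lies in $S_j$ with $p \leq_\calB q$, and since $p$ is $\leq_\calB$-maximal in $S_j$, this forces $q = p$. Setting $S_i' := S_i \setminus M$, the stability of $M$ yields $S_{i+1}' = \bigcup_{(p,k) \in S_i'} \bigl((p,k) \Bcdot a_{i+1} \setminus M\bigr)$, i.e., the $S_i'$ form a cycle in the powerset automaton of $\calB' := \calB|_{Q' \setminus M}$, the sub-NFA obtained by removing $M$ and any transitions into $M$. Since $\calB'$ is still acyclic and strictly smaller than $\calB$ (as $M \neq \emptyset$ whenever $S_0 \neq \emptyset$), the induction hypothesis forces $S_i' = S_0'$ for all $i$, whence $S_i = S_0$ and the cycle collapses to a self-loop.

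I expect the main obstacle to be establishing that $M$ is both common across the cycle and $a_j$-stable in $\calB$. The monotone containment $\downarrow S_i \subseteq \downarrow S_{i+1}$ alone only forces equality around the cycle; translating this into a statement about specific elements of the $S_i$ (rather than merely their down-sets) uses the clean observation that, in a partial order, the maximal elements of a downward closure are exactly the $\leq_\calB$-maximal elements of any generating subset. Once this is pinned down, the stability of $M$ and the descent to $\calB'$ are routine.
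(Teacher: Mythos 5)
There is a genuine gap: the statement you propose to prove by induction --- that determinizing \emph{any} acyclic NFA yields an acyclic DFA --- is false, as the paper itself stresses in \Cref{rem-nfa-acyclic}: the acyclic NFA for $a^*\shuffle b^*a$ on the left of Fig.~\ref{fig-power-automaton} determinizes to a cyclic DFA, since $(a+b)^*a$ is not $\calR$-trivial. Your own argument in fact uses more than acyclicity: the monotonicity claim $\downarrow S\subseteq\downarrow T$ requires every state of the NFA to have at least one $a$-successor for every letter $a$, which holds in $\calB$ only because $\calA$ is complete and Eq.~\eqref{eq-delta-B} always supplies the successor $(p\cdot a,k)$. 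This is exactly where the induction step collapses: a state of $\calB'=\calB|_{Q'\setminus M}$ whose only $a$-successors lay inside $M$ has no $a$-successor at all in $\calB'$, so $\calB'$ is merely an acyclic NFA, and the (false) induction hypothesis cannot legitimately be invoked for it.

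Nor can the induction be repaired by adding ``every state has a successor for every letter'' to the hypothesis: besides not being preserved by the removal of $M$, the strengthened statement is still false. Take $Q=\{1,2,3\}$, $A=\{a,b\}$, with $1\cdot a=\{1,2\}$, $1\cdot b=\{1\}$, $2\cdot a=\{2\}$, $2\cdot b=\{3\}$, and $3\cdot a=3\cdot b=\{3\}$: this NFA is acyclic, every state has a successor for every letter, yet the powerset automaton contains the nontrivial (and reachable) cycle $\{1,3\}\step{a}\{1,2,3\}\step{b}\{1,3\}$. On this example your construction finds $M=\{3\}$, correctly proves its stability, and then hands to the recursive call precisely the NFA of Fig.~\ref{fig-power-automaton} --- whose powerset is cyclic --- so the peeling of maximal elements genuinely cannot be iterated. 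The paper's proof avoids any recursion on sub-NFAs: it argues by induction on $\leq_\calB$ \emph{inside the fixed automaton} $\calB$, exploiting the dichotomy of Eq.~\eqref{eq-delta-B} (a transition of $\calB$ either keeps the $\calA$-component or moves it via $\calA$'s deterministic transition function) to show that \emph{every} $(p,k)$ occurring in the cycle, not just the $\leq_\calB$-maximal ones, has $p$ $B$-stable in $\calA$. That step is where the shuffle structure is indispensable and has no counterpart for general acyclic NFAs; your argument is sound for the top layer $M$ but does not reach the remaining states.
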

\begin{proof}
Let $S_0\step{a_1}S_1\step{a_2}S_2\cdots\step{a_n}S_n=S_0$ be a
non-empty cycle in $\calP$ and write $S=\bigcup_{i=0}^n S_i$ and
$B=\{a_1,\ldots,a_n\}$ for the set of states (resp., set of letters)
appearing along the cycle.

We first claim that for any $(p,k)\in S_n$, $p$ is $B$-stable in $\calA$,
which mean that $p\cdot a_i=p$ for $i=1,\ldots,n$. We prove this by induction
on $\leq_\calB$: so consider an arbitrary $(p,k)\in S_n$ and assume
that $p'$ is $B$-stable whenever there is some $(p',k')\in S_n$ with $(p',k')<_\calB(p,k)$.
Since $S_0\step{a_1}S_1\cdots\step{a_n}S_n$ and $(p,k)\in S_n$,
$\calB$ has a sequence of transitions
\[
(p_0,\ell_0)
\:\step{a_1}\: (p_1,\ell_1)
\:\step{a_2}\: (p_2,\ell_2)
\cdots
\:\step{a_n}\: (p_n,\ell_n) = (p,k)
\]
with $(p_i,\ell_i) \in S_i$ for all $i=1, \ldots, n$. Thus $p_0
\leq_\calA p_1 \cdots \leq_\calA p_n=p$ and $\ell_0 \leq \ell_1 \cdots
\leq \ell_n=k$.  If $p_0\neq p$, then $p_0 = p_1 = \ldots = p_{i-1}
\neq p_i\leq_\calA p_n$ for some $i$. Given $(p_{i-1},\ell_{i-1})
\step{a_i} (p_i,\ell_i)$ and $p_{i-1}\neq p_i$,
Eq.~\eqref{eq-delta-B} requires that $p_{i-1}\cdot {a_i}=p_{i}$ in
$\calA$, hence $p_{i-1}$ is not $B$-stable, but this contradicts the
induction hypothesis since $p_{i-1}=p_0$, $(p_0,\ell_0)$ belongs to $S_n$, and
$(p_0,\ell_0)<_\calB(p,k)$. Thus $p_0=p_1=\cdots=p_n=p$. If $\ell_0<\ell_n$, the
induction hypothesis applies and states that $p_0$ is $B$-stable. If
$\ell_0 = \ell_1 = \cdots = \ell_n$, then Eq.~\eqref{eq-delta-B}
requires that $p_{i-1} \cdot a_i=p_i$ for all $i<n$, which proves the
claim.

Since we can change the origin of the cycle, we conclude that $p$ is
$B$-stable in $\calA$ for any $(p,k)$ in $S$, not just in $S_n$. If
$p$ is $B$-stable, then $(p,k)\Bcdot a_i\ni (p,k)$ by
Eq.~\eqref{eq-delta-B}. Thus $S_{i-1}\Pcdot a_i \supseteq S_{i-1}$ for
all $i=1,\ldots,n$. This entails $S_0 \subseteq S_1 \subseteq \cdots
\subseteq S_n=S_0$ and then $S_0 = S_1 = \ldots = S_n$. We have proved
that all cycles in $\calP$ are self-loops, hence $\calP$ is acyclic as
claimed.

This entails that $L\uparrow_C w$, the language recognized by $\calP$,
is $\calR$-trivial and concludes the proof of Proposition~\ref{prop-main}.
\end{proof}

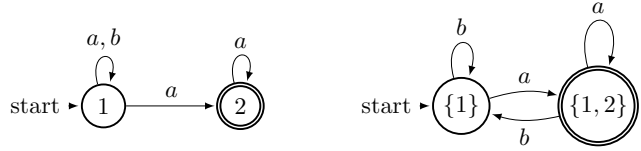
\begin{figure}[htbp]
\centering
\scalebox{0.9}{
\begin{tikzpicture}[>=latex,shorten >=1pt,auto,node distance=2cm,
scale=0.40,every state/.style={inner sep=2pt,minimum size=18pt,thick}]
\node[initial,state]     (q1)                          {$1$};
\node[accepting,state]   (q2)    [right of=q1]         {$2$};

\path[->] (q1) edge [loop above] node{$a,b$} (q1);
\path[->] (q1) edge node[above] {$a$} (q2);
\path[->] (q2) edge [loop above] node{$a$} (q2);

\node[initial,state]     (S1)    [right=7em of q2]     {$\{1\}$};
\node[accepting,state]   (S2)    [right of=S1]         {$\{1,2\}$};
\path[->] (S1) edge [bend left=15] node[above] {$a$} (S2);
\path[->] (S2) edge [bend left=15] node[below] {$b$} (S1);
\path[->] (S1) edge [loop above]   node{$b$} (S1);
\path[->] (S2) edge [loop above]   node{$a$} (S2);
\end{tikzpicture}
}
\caption{NFA for $a^* \shuffle b^*a$ and associated powerset DFA.}
\label{fig-power-automaton}
\end{figure}

\begin{remark}
\label{rem-nfa-acyclic}
\Cref{lem-P-acyclic} needs a proof because determinizing an acyclic
NFA does not always yield an acyclic
DFA.\footnote{Indeed nondeterministic and deterministic acyclic
automata have different expressive powers, see~\cite{schwentick2001}.}  For
example, the NFA obtained by shuffling DFAs for $a^*$ and for $b^*a$
is acyclic (see left of Fig.~\ref{fig-power-automaton}).  However, its
powerset automaton and the minimal DFA are not (see right of the
figure). Indeed, $a^*\shuffle b^*a=(a+b)^*a$ is not $\calR$-trivial.
\end{remark}

With Proposition~\ref{prop-main} it is easy to prove our main result.
\begin{proof}[Proof of Theorem~\ref{thm-main}]
We first assume that $L$ is $\calR$-trivial
and  consider several cases for $L'$:
\begin{itemize}
\item
If $L'$ is finite, we use distributivity of shuffle over unions:
$L\uparrow_C L'$ is $\calR$-trivial since it is a finite union
$\bigcup_{w\in L} L\uparrow_C w$ of $\calR$-trivial languages.
\item
If $L'$ is a shuffle ideal, i.e., if $L'=L'\shuffle A^*=L'\uparrow_C
A^*$, then $L\uparrow_C L'$ is a shuffle ideal too in view of
\[
L\uparrow_C L'
=
L\uparrow_C (L'\uparrow_C A^*)
=
(L\uparrow_C L')\uparrow_C A^*
\:.
\]
Recall now that shuffle ideals are always $\calR$-trivial.
\item
If $L'$ is cofinite, it is the union of a finite language and a
shuffle ideal, so this case reduces to the previous two cases by
distributing shuffle over union.
\end{itemize}
Once the result is proved for $X=\calR$, it extends to $X=\calL$ by
mirroring since $L$ is $\calL$-trivial if, and only if, its mirror
$\mirror{L}$ is $\calR$-trivial, and since $\mirror{(L\uparrow_C L')}=
\mirror{L}\uparrow_C\mirror{L'}$.

Finally, it extends to $X=\calJ$ since a finite monoid is
$\calJ$-trivial if, and only if, it is both $\calR$- and $\calL$-trivial.
\end{proof}

\begin{remark}
Masopust and Thomazo extended the UMS criterion to \emph{nondeterministic}
automata.  They showed that $L$ is piecewise-testable if it is
recognized by a complete acyclic NFA with the UMS
property~\cite[Thm.~25]{masopust2017}. The NFA that one obtains by
shuffling minimal DFAs for $L$ and $w$ is indeed acyclic and
complete. However it does not satisfy the UMS property in general
(already with $a^*\shuffle a$) so this additional characterization of
piecewise-testable language does not  directly entail our main result.
\end{remark}


\section{The question of piecewise complexity}
\label{sec-complexity}

We write $h_A(L)$ for the \emph{piecewise complexity} of $L$, defined
as the smallest $k$ such that $L$ is $k$-$\PT$, i.e., can be written as
a union $L=[u_1]_k \cup \cdots \cup [u_r]_k$ of $\sim_k$-classes over
$A^*$. We let $h_A(L)=\infty$ when $L$ is not piecewise-testable. For
notational convenience, we usually write $h(L)$ when the alphabet is
understood\footnote{The only situation where $A$ is relevant happens
for $h_A(A^*)=0<h_{A'}(A^*)=1$ when $A\subsetneq A'$.}  and
write $h(u)$ for $h(\{u\})$ when $L=\{u\}$ is a singleton.

It was argued in~\cite{KS-csl2016} that $h(L)$ is an important, robust
and useful, descriptive complexity measure for $\PT$ languages. In
this light, a natural question is to provide upper-bounds on
$h(L\shuffle L')$ as a function of $h(L)$ and $h(L')$.  
Computing or bounding $h(L)$ has received little attention
until~\cite{KS-csl2016}, and the available toolset for these questions
is still primitive. In this section we provide some preliminary
answers for $L\shuffle L'$ and slightly enrich the available toolset.
\\

Before looking at simpler situations, let us
note that, in general, the piecewise-complexity of $L\shuffle w$ can
be much higher than $h(L)$ and $h(w)$.
\begin{proposition}[Complexity blowup]
\label{prop-blowup}
One cannot bound $h(L\shuffle w)$ with a
polynomial of $h(L)+h(w)$, even if we require $h(L)= 0$.
(NB: this statement assumes unbounded alphabets.)
\end{proposition}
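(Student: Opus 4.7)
The plan is to show that, already for the simplest $L$'s, the complexity $h(L\shuffle w)$ can be made arbitrarily large while $h(w)$ stays bounded. First observe that $h(L)=0$ forces $L$ to be a union of $\sim_0$-classes, and since there is exactly one such class, $L\in\{\emptyset, A^*\}$. Discarding the empty case, take $L=A^*$; then $L\shuffle w$ coincides with the principal filter $L_w=\{x\mid w\subword x\}$.

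I would then consider, for each $n\geq 2$, the alphabet $A_n=\{a_1,\ldots,a_n\}$, the language $L_n=A_n^*$, and the word $w_n=a_1 a_2\cdots a_n$, and claim $h(L_n)=0$, $h(\{w_n\})=2$, and $h(L_{w_n})=n$. The first is immediate. For $h(\{w_n\})=2$: the $\sim_2$-class of any word is determined by its letter set and its length-two subwords; for $w_n$ these say ``every $a_i$ appears'' and ``$a_ia_j$ is a subword iff $i<j$''. Any $v\sim_2 w_n$ thus contains each $a_i$ at least once (letters) but no $a_ia_i$ (so at most once), and satisfies ``$a_i$ precedes $a_j$ iff $i<j$'', forcing $v=w_n$. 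Hence $[w_n]_2=\{w_n\}$, and $h(\{w_n\})\geq 2$ is clear because the $\sim_1$-class of $w_n$ contains every permutation of $a_1\cdots a_n$.

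The main step is $h(L_{w_n})=n$. The upper bound is free since $L_{w_n}$ is defined by the single length-$n$ subword condition $w_n\subword x$. For the lower bound I take $u=w_n^{n-1}$ and $v=(\mirror{w_n})^{n-1}=(a_na_{n-1}\cdots a_1)^{n-1}$ and verify $u\sim_{n-1}v$, $u\in L_{w_n}$, $v\notin L_{w_n}$. A Dilworth-style argument shows that a word $s\in A_n^*$ embeds as a subword in $w_n^m$ iff $s$ partitions into at most $m$ strictly increasing subsequences of letter indices, iff the longest non-increasing subsequence of $s$ has length at most $m$; the dual statement governs $(\mirror{w_n})^m$. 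Taking $m=n-1$, every word of length at most $n-1$ is automatically a subword of both $u$ and $v$, hence $u\sim_{n-1}v$. Meanwhile $w_n$ itself is non-decreasing of length $n$, so $w_n\not\subword v$, while $w_n\subword u$ is trivial. Hence $L_{w_n}$ is not a union of $\sim_{n-1}$-classes and $h(L_{w_n})\geq n$.

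Combining, $h(L_n\shuffle w_n)=n$ grows unboundedly with $n$ while $h(L_n)+h(w_n)\leq 2$ stays constant, so no polynomial in $h(L)+h(w)$ can bound $h(L\shuffle w)$; the alphabet size $|A_n|=n$ is unbounded, in agreement with the proposition's caveat. The main obstacle is the lower bound $h(L_{w_n})\geq n$: one must produce a concrete $\sim_{n-1}$-equivalent pair separated by the filter $L_{w_n}$, and the choice $(w_n^{n-1},(\mirror{w_n})^{n-1})$ together with the Dilworth characterisation of embeddings into a repeated monotone word appears to be the cleanest route.
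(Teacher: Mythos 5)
Your proof is correct, and it takes a genuinely different route from the paper's. Both arguments share the same skeleton---take $L=A_n^*$ so that $h(L)=0$ and $L\shuffle w=L_w$, then exploit the fact that $h(L_w)$ can be much larger than $h(w)$---but the witnesses differ. The paper takes $w=U_n$, the recursively defined words $U_{i+1}=(U_i a_{i+1})^\lambda U_i$, and imports two results from \cite{KS-csl2016}, namely $h(U_n)=n\lambda+1$ and $h(L_u)=|u|$, yielding a linear-versus-exponential gap. You instead take the simplest possible witness $w_n=a_1\cdots a_n$, prove $h(\{w_n\})=2$ by hand, and establish the lower bound $h(L_{w_n})\geq n$ via the explicit $\sim_{n-1}$-equivalent pair $\bigl(w_n^{n-1},(\mirror{w_n})^{n-1}\bigr)$ separated by $L_{w_n}$. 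This buys a self-contained argument and a strictly stronger conclusion: $h(L)+h(w)$ stays bounded by the constant $2$ while $h(L\shuffle w)=n$ is unbounded, so no function whatsoever---not merely no polynomial---of $h(L)+h(w)$ can bound $h(L\shuffle w)$ over unbounded alphabets. The paper's heavier witness $U_n$ is not wasted, though: it is reused in Theorem~\ref{thm-h-shuffle-words}(2).

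One caveat: the general ``Dilworth-style'' equivalence you assert, that $s\subword w_n^m$ iff $s$ partitions into at most $m$ strictly increasing subsequences, is false as stated. For instance $a_2a_1a_4a_3$ splits into two increasing subsequences yet is not a subword of $(a_1a_2a_3a_4)^2$; embeddability into $w_n^m$ is governed by a factorization into at most $m$ \emph{consecutive} increasing blocks, not by arbitrary subsequence partitions. This does not damage your proof, because the two instances you actually need are the easy ones: any $s$ with $|s|\le n-1$ embeds into $n-1$ copies by placing one letter per copy, and $w_n\not\subword(\mirror{w_n})^{n-1}$ because each copy of $\mirror{w_n}$ can contribute at most one letter of the strictly increasing word $w_n$. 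I recommend stating those two facts directly and dropping the general claim.
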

\begin{proof}
Pick some $\lambda\in\Nat$ and let $U_n$ be a word over a $n$-letter
alphabet $A_n=\{a_1,\ldots,a_n\}$, given by $U_0=\epsilon$ and
$U_{i+1}=(U_i a_{i+1})^\lambda U_i$.  It is known that
$h(U_n)=n\lambda+1$~\cite[Prop.~3.1]{KS-csl2016}.
On the other hand $h(A_n^*\shuffle
U_n)=h(L_{U_n})=|U_n|=(\lambda+1)^n-1$ since, for any word $u$,
$h(L_u)=|u|$~\cite[Prop.~4.1]{KS-csl2016}.
\end{proof}

\subsection{Simple shuffles}

\begin{proposition}
\label{prop-h-disjoint-alpha}
Assume that $L_1$ and $L_2$ are two non-empty piecewise-testable
languages on disjoint alphabets. Then $h(L_1\shuffle L_2)=\max
(h(L_1),h(L_2))$.
\end{proposition}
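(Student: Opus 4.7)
The plan is to prove the two inequalities $h(L_1\shuffle L_2)\leq \max(h(L_1),h(L_2))$ and $h(L_1\shuffle L_2)\geq \max(h(L_1),h(L_2))$ separately. Write $A_1,A_2$ for the subalphabets of $L_1,L_2$, set $A=A_1\cup A_2$ (with $A_1\cap A_2=\emptyset$), and let $k=\max(h(L_1),h(L_2))$.

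For the upper bound I would apply \Cref{lem-disjoint-alpha} with $\frakF$ equal to the family of $k$-$\PT$ languages over $A$. Three checks are needed: (i) $\frakF$ is closed under intersection, which is immediate since a $k$-$\PT$ language is by definition a union of $\sim_k$-classes; (ii) $\frakF$ is closed under inverse morphisms, a routine consequence of $u\sim_k v\Rightarrow f(u)\sim_k f(v)$ for any monoid morphism $f$; and (iii) each $L_i$, now viewed as a language over $A$, belongs to $\frakF$. For (iii) with $k\geq 1$: any $w\in A^*$ with $w\sim_k^{A} u$ for some $u\in A_i^*$ must share the length-$1$ subwords of $u$, forcing $w\in A_i^*$; and over $A_i^*$ the relations $\sim_k^{A}$ and $\sim_k^{A_i}$ coincide, so $k$-$\PT$-ness over $A_i$ lifts to $A$. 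The degenerate case $k=0$ is handled directly: non-emptiness forces $L_i=A_i^*$, and $A_1^*\shuffle A_2^*=A^*$ is $0$-$\PT$ over $A$.

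For the lower bound I would show $h(L_1)\leq h(L_1\shuffle L_2)$, the argument for $L_2$ being symmetric. Fix any $v\in L_2$ using non-emptiness. The key step is that whenever $u_1,u_2\in A_1^*$ satisfy $u_1\sim_k^{A_1} u_2$, one has $u_1 v\sim_k^{A} u_2 v$: since $A_1\cap A_2=\emptyset$, every length-$\leq k$ subword of $u_i v$ splits uniquely as $xy$ with $x\subword u_i$ in $A_1^*$ and $y\subword v$ in $A_2^*$, and $|x|\leq k$ gives $x\subword u_1 \iff x\subword u_2$. If $L_1\shuffle L_2$ is $k$-$\PT$ over $A$, then $u_1\in L_1$ yields $u_1 v\in L_1\shuffle L_2$, hence $u_2 v\in L_1\shuffle L_2$. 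Any witness $u_2 v\in u'\shuffle v'$ with $u'\in L_1\subseteq A_1^*$ and $v'\in L_2\subseteq A_2^*$ forces $u'$ to equal the $A_1$-projection of $u_2 v$, namely $u_2$; so $u_2\in L_1$, proving $L_1$ is a union of $\sim_k^{A_1}$-classes.

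The main obstacle is the alphabet-sensitivity of $h_A$ flagged in the paper's footnote: step (iii) above needs the extra length-$1$ argument for $k\geq 1$ together with a separate treatment of $k=0$, and the lower-bound step must extract $k$-$\PT$-ness of $L_1$ over $A_1$ (not merely over $A$) from a $k$-$\PT$ decomposition of $L_1\shuffle L_2$ over $A$. Everything else amounts to bookkeeping on disjoint-alphabet projections.
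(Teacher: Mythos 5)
Your proposal is correct and follows essentially the same route as the paper: the upper bound via \Cref{lem-disjoint-alpha} applied to the variety of $k$-$\PT$ languages, and the lower bound by appending a fixed $v\in L_2$, using that $\sim_k$ is a congruence, and recovering membership in $L_1$ from the disjointness of the alphabets. Your extra care about the alphabet-sensitivity of $h$ (the $k=0$ case and lifting $k$-$\PT$-ness from $A_i$ to $A$) only makes explicit details the paper leaves to its footnote on $h_A$ versus $h_{A'}$.
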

\begin{proof}
Since $k$-$\PT$ languages form a variety~\cite[Lemma~2.3]{therien81},
\Cref{lem-disjoint-alpha} applies and yields $h(L_1\shuffle
L_2)\leq \max(h(L_1),h(L_2))$.

To see that $h(L_1\shuffle L_2)\geq h(L_1)$, we  write $k=h(L_1\shuffle
L_2)$ and show that $L_1$ and $L_2$ are closed under $\sim_k$:
Pick any word $u\in L_1$ and any $u'\in A_1^*$ with $u\sim_k
u'$. Since $L_2$ is not empty, there is some $v\in L_2$ and we obtain
$uv\in L_1\shuffle L_2$, and also $u'v\in L_1\shuffle L_2$ since
$uv\sim_k u'v$. Necessarily $u'\in L_1$ since $L_1$ and $L_2$ have
disjoint alphabets. Hence $L_1$
is closed under $\sim_k$, i.e., $h(L_1)\leq k$. The same
reasoning applies to $L_2$.
\end{proof}

\begin{proposition}
\label{prop-h-shuf-filters}
Assume that $L_u$ and $L_v$ are two principal filters.
Then $h(L_u\shuffle L_v)\leq h(L_u)+h(L_v)$.
\end{proposition}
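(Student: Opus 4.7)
The plan is to reduce the claim to standard facts about principal filters and the fact that the $k$-$\PT$ languages are closed under finite union. The first step is to rewrite $L_u\shuffle L_v$ in a form that makes its piecewise complexity manifest. Since each principal filter is a shuffle ideal, we have $L_u = u\shuffle A^*$ and $L_v = v\shuffle A^*$; combining these and using associativity of shuffle together with $A^*\shuffle A^* = A^*$ gives
\[
L_u \shuffle L_v \;=\; (u\shuffle A^*)\shuffle (v\shuffle A^*) \;=\; (u\shuffle v)\shuffle A^* \;=\; \bigcup_{w\in u\shuffle v} L_w,
\]
a finite union of principal filters, each generated by a word of length $|u|+|v|$.

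The second step applies the formula $h(L_w) = |w|$, already invoked (via \cite[Prop.~4.1]{KS-csl2016}) in the proof of Proposition~\ref{prop-blowup}. Hence every $L_w$ in the union above is $(|u|+|v|)$-$\PT$. Since the class of $k$-$\PT$ languages is defined as the unions of $\sim_k$-classes, it is trivially closed under finite unions, so $L_u\shuffle L_v$ itself is $(|u|+|v|)$-$\PT$. Using $h(L_u)=|u|$ and $h(L_v)=|v|$ one more time, this yields $h(L_u\shuffle L_v)\leq |u|+|v| = h(L_u)+h(L_v)$, as desired.

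There is essentially no obstacle here beyond the set-theoretic identity $L_u\shuffle L_v = \bigcup_{w\in u\shuffle v} L_w$, which is immediate from the shuffle-ideal viewpoint; everything else is a direct application of formulas and closure properties already recalled in Section~\ref{sec-complexity}. One could equivalently argue directly that $L_u\shuffle L_v$ is closed under $\sim_{|u|+|v|}$: if $x\in L_u\shuffle L_v$ then some $w\in u\shuffle v$ of length $|u|+|v|$ embeds in $x$; any $y\sim_{|u|+|v|} x$ has the same subwords up to that length, so $w\preccurlyeq y$ and $y$ lies in $L_w\subseteq L_u\shuffle L_v$.
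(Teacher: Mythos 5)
Your proof is correct and follows essentially the same route as the paper: the decomposition $L_u\shuffle L_v=\bigcup_{w\in u\shuffle v}L_w$, the fact that $h(L_w)=|w|=|u|+|v|$, and closure of $k$-$\PT$ languages under finite union. The extra justification you give (via shuffle ideals and the direct $\sim_{|u|+|v|}$ argument) only spells out what the paper leaves implicit.
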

\begin{proof}
Recall that $h(L_u)=|u|$ as noted above. We then observe
that $L_u\shuffle L_v=\bigcup_{w\in u\shuffle v}L_w$ and that
$|w|=|u|+|v|$ for all $w\in u\shuffle v$.
\end{proof}
The upper bound in \Cref{prop-h-shuf-filters} can be reached, an easy
example being $h(L_{a^n}\shuffle L_{a^m})=h(L_{a^{n+m}})=n+m$.  The
inequality can also be strict, as exemplified by
\Cref{prop-h-disjoint-alpha}.

\subsection{Shuffling finitely many words}

Finite languages are piecewise-testable and closed under shuffle
products. Their piecewise complexity reduces to the case of individual
words
in view of the following (from~\cite{KS-csl2016}):
\begin{gather}
\label{eq-h-F}
h(F) = \max_{u\in F} h(u)
\quad
\text{when $F$ is finite}.
\end{gather}
\begin{lemma}
\label{lem-h-shuffle-words}
$h(u_1\shuffle u_2 \shuffle \cdots \shuffle u_m)\leq
1+\max_{a\in A}\bigl(|u_1|_a+\cdots+|u_m|_a\bigr)$.
\end{lemma}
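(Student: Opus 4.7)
The plan is to bootstrap the inequality from the single-word case. By equation~\eqref{eq-h-F} applied to the finite language $L = u_1\shuffle\cdots\shuffle u_m$, we have $h(L) = \max_{v\in L} h(\{v\})$, and every $v\in L$ satisfies $|v|_a = \sum_i |u_i|_a$, so $\max_a|v|_a$ equals the right-hand side of the claimed bound minus one. It thus suffices to prove the single-word inequality $h(\{v\})\leq 1+\max_a|v|_a$ for an arbitrary $v\in A^*$.

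To establish this, set $N = \max_a|v|_a$ and take any $w\sim_{N+1}v$. The inequalities $|v|_a\leq N<N+1$ force $|w|_a=|v|_a$ for every letter, so $v$ and $w$ have the same Parikh vector and, in particular, the same length. I would then use the elementary observation that two words of the same length over $A$ are equal iff their two-letter projections $\pi_{ab}(\cdot)$ (which erase every letter outside $\{a,b\}$) agree for every pair $\{a,b\}$: a first position of disagreement is witnessed by two distinct letters $a,b$ and survives in $\pi_{ab}(v)$ versus $\pi_{ab}(w)$. Since the subword relation commutes with projection onto a sub-alphabet, $v\sim_{N+1}w$ implies $\pi_{ab}(v)\sim_{N+1}\pi_{ab}(w)$ for each pair, and the projected words share Parikh $(|v|_a,|v|_b)$ with $\max(|v|_a,|v|_b)\leq N$. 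The whole claim reduces to a 2-letter sub-claim: \emph{over $\{a,b\}$, two words with common Parikh $(m,n)$ that are $\sim_{M+1}$-equivalent, where $M=\max(m,n)$, must be equal}.

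To prove the sub-claim, WLOG $m\geq n$ (so $M=m$) and encode each $u\in\{a,b\}^*$ with Parikh $(m,n)$ by the sequence $\beta_k(u) = $ ``number of $b$'s strictly before the $k$-th $a$ in $u$'' for $k=1,\ldots,m$. This sequence determines $u$ uniquely, so it suffices to recover each $\beta_k$ from $\sim_{M+1}$. For each $j\in\{0,\ldots,k\}$ the subword $b^j a^{m-k+1}$ has length $\leq M+1$ and is a subword of $u$ iff $j\leq\beta_k(u)$; hence $\sim_{M+1}$ pins down $\beta_k(u)$ exactly whenever $\beta_k(u)\in[0,k-1]$ (the ``yes''-to-``no'' transition at $j=\beta_k(u)+1$ lies within the visible range). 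Symmetrically, for each $j\in\{0,\ldots,m-k+1\}$ the subword $a^k b^j$ has length $\leq M+1$ and is a subword of $u$ iff $j\leq n-\beta_k(u)$, so $\sim_{M+1}$ pins down $\beta_k(u)$ exactly whenever $\beta_k(u)\in[n-m+k,n]$. A short interval check shows that under $m\geq n$ the union $[0,k-1]\cup[n-m+k,n]$ exhausts $[0,n]$, so $\beta_k(u)=\beta_k(u')$ for every $k$, forcing $u=u'$.

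The only delicate step is the 2-letter sub-claim; everything else is a sequence of routine reductions (applying equation~\eqref{eq-h-F}, matching Parikh vectors, and projecting pairwise). The creative input is the choice of two ``opposite'' families of test subwords together with the verification that, precisely in the regime $m\geq n$, their visible ranges for $\beta_k$ interlock to cover $[0,n]$. Boundary cases ($m=0$, $n=0$, $M=0$) are trivial.
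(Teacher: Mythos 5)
Your proof is correct, but it takes a genuinely different route from the paper's. The paper's argument is a three-line reduction: from the inclusion $u_1\shuffle\cdots\shuffle u_m \subseteq a_1^{\ell_1}\shuffle\cdots\shuffle a_n^{\ell_n}$ (where $\ell_j=|u_1|_{a_j}+\cdots+|u_m|_{a_j}$), monotonicity of $h$ on finite languages (a consequence of Eq.~\eqref{eq-h-F}), the disjoint-alphabet formula of \Cref{prop-h-disjoint-alpha}, and $h(a^\ell)=\ell+1$, the bound $\max_j(1+\ell_j)$ falls out immediately. You instead use Eq.~\eqref{eq-h-F} only to reduce to singletons, and then prove from scratch the single-word fact that $[v]_{N+1}=\{v\}$ when $N=\max_{a}|v|_a$: Parikh vectors are forced by $\sim_{N+1}$, equality of words of equal length reduces to equality of all two-letter projections, projections are compatible with $\sim_k$, and the two-letter case is settled by your $\beta_k$ encoding with the two interlocking families of test subwords. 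I checked the delicate points — the characterizations $b^ja^{m-k+1}\subword u \iff j\leq\beta_k(u)$ and $a^kb^j\subword u\iff j\leq n-\beta_k(u)$, the length bounds $\leq M+1$, and the covering $[0,k-1]\cup[n-m+k,n]\supseteq[0,n]$ under $m\geq n$ — and they all hold, so the proof is sound. What the paper's route buys is brevity and modularity, since it reuses \Cref{prop-h-disjoint-alpha} and Eq.~\eqref{eq-h-F}, both already in place; what yours buys is self-containedness (no appeal to \Cref{prop-h-disjoint-alpha} or the variety lemma behind it) and an explicit combinatorial statement of independent interest, namely that a word with at most $N$ occurrences of each letter is the \emph{unique} word in its $\sim_{N+1}$-class — a fact the paper's proof yields only implicitly via the inclusion $\{v\}\subseteq a_1^{|v|_{a_1}}\shuffle\cdots\shuffle a_n^{|v|_{a_n}}$.
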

\begin{proof}
Assume $A=\{a_1,\ldots,a_n\}$ and define $\ell_1,\ell_2,\ldots,\ell_n$
via $\ell_j=|u_1|_{a_j}+\cdots+|u_m|_{a_j}$.  From
\[
u_1 \shuffle \cdots \shuffle u_m \:\subseteq\:
a_1^{\ell_1}\shuffle \cdots \shuffle a_n^{\ell_n}
\:,
\]
we deduce
\begin{align*}
h(u_1\shuffle \cdots \shuffle u_m)&\leq
h\bigl(a_1^{\ell_1}\shuffle \cdots \shuffle a_n^{\ell_n}\bigr)
\\
\shortintertext{by Eq.~\eqref{eq-h-F}}
&= \max
\bigl( h(a_1^{\ell_1}),\ldots,h(a_n^{\ell_n})\bigr)
\\
\shortintertext{by Prop.~\ref{prop-h-disjoint-alpha}}
&=
\max(1+\ell_1,\ldots,1+\ell_n)\:.\qedhere
\end{align*}
\end{proof}
We may now bound $h(u_1\shuffle u_2\shuffle  \cdots)$ as a function of
$h(u_1),h(u_2),\ldots$.
\begin{theorem}[Upper bound for shuffles of words]
\label{thm-h-shuffle-words}
Assume $|A|=n$.\\
(1) $h(u_1\shuffle u_2 \shuffle \cdots \shuffle u_m)$
is in $O\bigl(\bigl[\sum_{i=1}^m h(u_i)\bigr]^n\bigr)$.
\\
(2)
This upper bound is tight: for every $\lambda\in\Nat$, there exists words
$u_1,\ldots,u_m$ with fixed $m=n$ and such that $h(u_1\shuffle \cdots \shuffle
u_m)=(\lambda+1)^n$ and $h(u_1)+\cdots+h(u_m)=n^2\lambda+n$.
\end{theorem}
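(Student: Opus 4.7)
For part (1), the plan is to combine \Cref{lem-h-shuffle-words} with a length-versus-complexity bound. That lemma gives $h(u_1\shuffle\cdots\shuffle u_m)\leq 1+\sum_i |u_i|$. The extremal family $U_n$ of \cite[Prop.~3.1]{KS-csl2016} also witnesses the general fact that any word $u$ over an $n$-letter alphabet satisfies $|u|\leq c_n\cdot h(u)^n$ for a constant $c_n$ depending only on $n$. Plugging this in and using the elementary inequality $\sum_i x_i^n \leq (\sum_i x_i)^n$, valid for $x_i\geq 0$ and $n\geq 1$ since each $x_i^n$ is a nonnegative term of the multinomial expansion, we obtain $h(u_1\shuffle\cdots\shuffle u_m) = O\bigl(\bigl[\sum_i h(u_i)\bigr]^n\bigr)$.

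For part (2), the plan is to take $u_1 = u_2 = \cdots = u_n = U_n$, the same extremal words ($U_0 = \epsilon$, $U_{i+1} = (U_i a_{i+1})^\lambda U_i$), so that $\sum_i h(u_i) = n(n\lambda+1) = n^2\lambda+n$ as required. It remains to prove $h(U_n\shuffle\cdots\shuffle U_n) = (\lambda+1)^n$.

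Since this shuffle is a finite language, Eq.~\eqref{eq-h-F} reduces the task to computing $\max_w h(w)$ over $w$ in the shuffle. For the lower bound the natural witness is $w^\star = (U_n)^n$, the $n$-fold plain concatenation, which is a valid member of any shuffle; one then shows $h(w^\star) \geq (\lambda+1)^n$ by producing, for each $k < (\lambda+1)^n$, a distinct word $v_k \sim_k w^\star$. The induction proceeds on $n$, driven by the recursion $U_n = (U_{n-1}a_n)^\lambda U_{n-1}$, with base case $h(a_1^\lambda) = \lambda+1$. (For $n=2,\lambda=1$ one has $w^\star = a_1a_2a_1a_1a_2a_1$, and deleting its fourth letter yields $v_3 = a_1a_2a_1a_2a_1$, which is $\sim_3$-equivalent to $w^\star$, forcing $h(w^\star)\geq 4 = (\lambda+1)^n$.)

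The main obstacle is the matching upper bound $h(w)\leq(\lambda+1)^n$ for every $w$ in the shuffle: \Cref{lem-h-shuffle-words} alone only yields $1+n\lambda(\lambda+1)^{n-1}$, losing a factor of order $n$. A finer combinatorial argument exploiting the self-similar structure of $U_n$, together with the redundancies introduced by shuffling identical copies, will be needed; a direct induction on the recursive decomposition of $U_n$ is the natural route.
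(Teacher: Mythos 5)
Part (1) of your plan is essentially the paper's argument: \Cref{lem-h-shuffle-words} plus the bound $|u|<\bigl(\tfrac{h(u)}{|A|}+2\bigr)^{|A|}$ of \cite[Prop.~3.8]{KS-csl2016} (note that this general upper bound on $|u|$ in terms of $h(u)$ is a cited result, not something ``witnessed'' by the extremal family $U_n$, which only shows it is close to tight). That part is fine.

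Part (2) has a genuine gap, and it is not just the missing upper-bound argument you flag at the end: your choice $u_1=\cdots=u_n=U_n$ does not satisfy the claimed equality, so no ``finer combinatorial argument'' can rescue it. Concretely, take $n=2$, $\lambda=1$, so $U_2=a_1a_2a_1$ and $(\lambda+1)^n=4$. The word $w=a_1a_2a_1a_1a_2a_1=U_2U_2$ lies in $U_2\shuffle U_2$, while $w'=a_1a_2a_1a_1a_1a_2a_1$ does not (it has length $7$, and every word of the shuffle has length $6$); yet $w\sim_4 w'$ (both contain $a_1^4$, all $a_1^ia_2a_1^j$ with $i+j\le 3$, and exactly the two-$a_2$ patterns $a_1^ia_2a_1^ja_2a_1^k$ with $i,k\le 1$ and $i+j+k\le 2$). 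Hence $U_2\shuffle U_2$ is not $\sim_4$-closed and $h(U_2\shuffle U_2)\ge 5>4$. The issue is that with identical copies the letter counts are badly unbalanced ($\max_a\sum_i|u_i|_a=n\lambda(\lambda+1)^{n-1}$, much larger than $|U_n|$), and the loss in \Cref{lem-h-shuffle-words} is real, not an artifact of the lemma. The paper's construction instead takes $u_j=r^j(U_n)$ for the circular renaming $r:a_i\mapsto a_{i+1\bmod n}$. This single twist does double duty: every letter then occurs exactly $\ell=(\lambda+1)^n-1$ times in total, so \Cref{lem-h-shuffle-words} itself already gives the exact upper bound $h(L)\le\ell+1$; and the $i$-th letters of $u_1,\dots,u_n$ form a permutation of $A$, so $(a_1\cdots a_n)^\ell\in L$ while $(a_1\cdots a_n)^\ell a_1\notin L$ and the two are $\sim_\ell$-equivalent, giving $h(L)>\ell$. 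You should replace the identical copies by these renamed copies; your lower-bound witness $U_n^n$ and the proposed induction on the recursive structure of $U_n$ then become unnecessary.
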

\begin{proof}
(1) 
By \Cref{lem-h-shuffle-words},
\begin{align*}
 & h(u_1\shuffle u_2 \shuffle \cdots \shuffle u_m)-1
\\
\leq &
\max_{a\in A}\bigl(|u_1|_a+\cdots+|u_m|_a\bigr)
\leq
\sum_{i=1}^m |u_i|
\:.
\end{align*}
On the other hand, \cite[Prop.~3.8]{KS-csl2016}  showed that
\[
|u|< \left(\frac{h(u)}{|A|}+2\right)^{|A|} \text{ for any word $u\in A^*$.} 
\]
Thus, for fixed $A$, $|u|$ is  $O(h(u)^{|A|})$  and
$\sum_i |u_i|$ is  $O\bigl(\bigl[\sum_i h(u_i)\bigr]^{|A|}\bigr)$, which
establishes the upper bound claim.

\noindent
(2) We consider $U_n$ as defined in the proof of Proposition~\ref{prop-blowup} and, for
$j=1,\ldots,m$, let $u_{j}$ be $r^j(U_n)$ where $r:A^*\to A^*$ is the
circular renaming that replaces each $a_i$ by $a_{i+1}$ (counting
modulo $n$). Write $\ell$ for $|U_n|$, i.e., $\ell=(\lambda+1)^n-1$.  We
saw that $h(u_{j})=h(U_n)=n\lambda+1$ so, fixing $m=n$, $\sum_{i=1}^m
h(u_i)=n^2\lambda+n$ as claimed. Let $L = u_{1}\shuffle 
u_{2}\shuffle \cdots\shuffle u_{n}$. There remains to prove that
$h(L)=(\lambda+1)^n=\ell+1$.

We first observe that, for any letter $a_j$, $|u_{1}|_{a_j} + \cdots +
|u_{n}|_{a_j} = \ell$.  Indeed, the circular renamings ensure that
\[
|r^1(u)|_{a_j}+\cdots+|r^n(u)|_{a_j}=
|u|_{a_{j-1}}+\cdots+|u|_{a_{j-n}}=|u|
\]
for any word $u\in A^*$.  We then obtain $h(L)\leq \ell+1$ by \Cref{lem-h-shuffle-words}.

There remains to show $h(L)>\ell$.  For this, we observe that, for any
$i=1,\ldots,\ell$, the $i$-th letters $u_{1}[i],\ldots,u_{n}[i]$ form
a permutation of $\{a_1,\ldots,a_n\}$. Thus we can obtain
$(a_1a_2\cdots a_n)^\ell$ by shuffling $u_{1},\ldots,u_{n}$, i.e.,
$(a_1a_2\cdots a_n)^\ell\in L$.  However $(a_1a_2\cdots
a_n)^{\ell}a_1$ is not in $L$ (it is too long) and $(a_1a_2\cdots
a_n)^{\ell}a_1 \sim_{\ell} (a_1a_2\cdots a_n)^\ell$ (both words
contain all possible subwords of length $\leq\ell$). Thus $L$ is not
closed under $\sim_\ell$, which concludes the proof.
\end{proof}

\subsection{A general upper bound?}

As yet we do not have a good upper bound in the general case.  

Recall
that the \emph{depth} of a complete DFA is the maximal length of an acyclic path
from some initial to some reachable state. When $L$ is regular, we write
$\depth(L)$ for the depth of the canonical DFA for $L$.  Since
$h(L)\leq\depth(L)$ holds for all $\PT$ languages~\cite{klima2013}, one
could try to  bound
$\depth(L\shuffle w)$ in terms of $\depth(L)$ and $w$.
This does not seem very promising: First, for $L$ fixed, $\depth(L\shuffle w)$ cannot be bounded in
$O(|w|)$.  Furthermore, $\depth(L)$ can be much larger than $h(L)$: if
$L$ is $k$-$\PT$ and $|A|=n$ then the depth of the minimal DFA for $L$
can be as large as $\binom{k+n}{k}-1$~\cite[Thm.~31]{masopust2017}. Finally, this approach would only
provide very large upper bounds, far above what we observe in
experiments.



\section{Conclusion}
\label{sec-concl}

We proved that $L\shuffle w$ is piecewise-testable when $L$ is (and
when $w$ is a word), relying on a little-used characterization of
piecewise-testable languages. This is part of a more general research
agenda: identify constructions that produce piecewise-testable
languages and compute piecewise complexity modularly. In this
direction, an interesting open problem is to identify sufficient
conditions that guarantee that a Kleene star $L^*$, or a concatenation
$L\concatdot L'$, is piecewise-testable. It is surprising that such
questions seem easier for shuffle product than for concatenation.


\paragraph{Acknowledgments}
We thank Prateek Karandikar, Amaldev Manuel, Laurent Doyen, Stefan
Schwoon and the anonymous reviewers for their comments and suggestions.



\bibliographystyle{alpha}
\bibliography{used}

\begin{thebibliography}{DGK08}

\bibitem[BF80]{brzozowski80b}
J.~A. Brzozowski and F.~E. Fich.
\newblock Languages of {R}-trivial monoids.
\newblock {\em Journal of Computer and System Sciences}, 20(1):32--49, 1980.

\bibitem[Brz76]{brzozowski76b}
J.~A. Brzozowski.
\newblock A generalization of finiteness.
\newblock {\em Semigroup Forum}, 13(1):239--251, 1976.

\bibitem[CR12]{castiglione2012}
G.~Castiglione and A.~Restivo.
\newblock On the shuffle of star-free languages.
\newblock {\em Fundamenta Informaticae}, 116(1--4):35--44, 2012.

\bibitem[DGK08]{DGK-ijfcs08}
V.~Diekert, P.~Gastin, and M.~Kufleitner.
\newblock A survey on small fragments of first-order logic over finite words.
\newblock {\em Int.\ J.\ Foundations of Computer Science}, 19(3):513--548, 2008.

\bibitem[{\'E}S98]{esik98}
Z.~{\'E}sik and I.~Simon.
\newblock Modeling literal morphisms by shuffle.
\newblock {\em Semigroup Forum}, 56(2):225--227, 1998.

\bibitem[Hai69]{haines69}
L.~H. Haines.
\newblock On free monoids partially ordered by embedding.
\newblock {\em Journal of Combinatorial Theory}, 6(1):94--98, 1969.

\bibitem[H{\'e}a02]{heam2002}
P.-C. H{\'e}am.
\newblock On shuffle ideals.
\newblock {\em RAIRO Theoretical Informatics and Applications}, 36(4):359--384, 2002.

\bibitem[HSZ17]{HSZ-lics2017}
S.~Halfon, {\relax Ph}.~Schnoebelen, and G.~Zetzsche.
\newblock Decidability, complexity, and expressiveness of first-order logic over the subword ordering.
\newblock In {\em Proc.\ 32nd ACM/IEEE Symp.\ Logic in Computer Science (LICS 2017)}, pages 1--12, 2017.

\bibitem[Kar94]{kari94}
L.~Kari.
\newblock Deletion operations: closure properties.
\newblock {\em International Journal of Computer Mathematics}, 52(1--2):23--42, 1994.

\bibitem[KP12]{klima2012b}
O.~Kl{\'{\i}}ma and L.~Pol{\'{a}}k.
\newblock On biautomata.
\newblock {\em RAIRO Theoretical Informatics and Applications}, 46(4):573--592, 2012.

\bibitem[KP13]{klima2013}
O.~Kl\'{\i}ma and L.~Pol{\'a}k.
\newblock Alternative automata characterization of piecewise testable languages.
\newblock In {\em Proc.\ 17th Int.\ Conf.\ Developments in Language Theory (DLT 2013)}, volume 7907 of {\em Lecture Notes in Computer Science}, pages 289--300. Springer, 2013.

\bibitem[KS16]{KS-csl2016}
P.~Karandikar and {\relax Ph}.~Schnoebelen.
\newblock The height of piecewise-testable languages with applications in logical complexity.
\newblock In {\em Proc.\ 25th EACSL Conf.\ Computer Science Logic (CSL 2016)}, volume~62 of {\em Leibniz International Proceedings in Informatics}, pages 37:1--37:22, 2016.
\newblock Full version available as \url{arXiv:1511.01807} [cs.LO].

\bibitem[Mas16]{masopust2016b}
T.~Masopust.
\newblock Piecewise testable languages and nondeterministic automata.
\newblock In {\em Proc.\ 41st Int.\ Symp.\ Math.\ Found.\ Comp.\ Sci.\ (MFCS 2016)}, volume~58 of {\em Leibniz International Proceedings in Informatics}, pages 67:1--67:14, 2016.

\bibitem[MT17]{masopust2017}
T.~Masopust and M.~Thomazo.
\newblock On {Boolean} combinations forming piecewise testable languages.
\newblock {\em Theoretical Computer Science}, 682:165--179, 2017.

\bibitem[PS83]{pin83}
J.-{\'E}. Pin and J.~Sakarovitch.
\newblock Some operations and transductions that preserve rationality.
\newblock In {\em Proc.\ 6th GI Conf.\ on Theor.\ Comp.\ Sci.}, volume 145 of {\em Lecture Notes in Computer Science}, pages 277--288. Springer, 1983.

\bibitem[Sim72]{simon72}
I.~Simon.
\newblock {\em Hierarchies of Event with Dot-Depth One}.
\newblock PhD thesis, University of Waterloo, Dept.\ Applied Analysis and Computer Science, Waterloo, ON, Canada, 1972.

\bibitem[Sim75]{simon75}
I.~Simon.
\newblock Piecewise testable events.
\newblock In {\em Proc.\ 2nd GI Conf.\ on Automata Theory and Formal Languages}, volume~33 of {\em Lecture Notes in Computer Science}, pages 214--222. Springer, 1975.

\bibitem[SS83]{sakarovitch83}
J.~Sakarovitch and I.~Simon.
\newblock Subwords.
\newblock In M.~Lothaire, editor, {\em Combinatorics on Words}, volume~17, chapter~6, pages 105--142. Cambridge Univ.\ Press, 1983.

\bibitem[Ste85]{stern85}
J.~Stern.
\newblock Complexity of some problems from the theory of automata.
\newblock {\em Information and Control}, 66(3):163--176, 1985.

\bibitem[STV02]{schwentick2001}
Th. Schwentick, D.~Th{\'{e}}rien, and H.~Vollmer.
\newblock Partially-ordered two-way automata: {A} new characterization of {DA}.
\newblock In {\em Proc.\ 5th Int.\ Conf.\ Developments in Language Theory (DLT 2001)}, volume 2295 of {\em Lecture Notes in Computer Science}, pages 239--250. Springer, 2002.

\bibitem[Th{\'e}81]{therien81}
D.~Th{\'e}rien.
\newblock Classification of finite monoids: The language approach.
\newblock {\em Theoretical Computer Science}, 14(2):195--208, 1981.

\end{thebibliography}

\end{document}